\newcommand{\unif}[1]{ \overset{#1}{\leftarrow}}
\newcommand{\Z}[2]{\mathbb{Z}_{#1}^{#2}}
\begin{document}
%
\title{Attacks on a Privacy-Preserving Publish-Subscribe System and\\ a Ride-Hailing Service}

%



%

\author{Srinivas Vivek}
\institute{IIIT Bangalore, IN \\\small\texttt{srinivas.vivek@iiitb.ac.in} }


\pagestyle{plain}
\maketitle              
%

\begin{abstract}

A privacy-preserving Context-Aware Publish-Subscribe System (CA-PSS) enables an intermediary (broker) to match the content from a publisher and the subscription by a subscriber based on the current context while preserving confidentiality of the subscriptions and notifications. While a privacy-preserving Ride-Hailing Service (RHS) enables an intermediary (service provider) to match a ride request with a taxi driver in a privacy-friendly manner. In this work, we attack a privacy-preserving CA-PSS proposed by Nabeel et al. (2013), where we show that any entity in the system including the broker can learn the confidential subscriptions of the subscribers. We also attack a privacy-preserving RHS called lpRide proposed by Yu et al. (2019), where we show that any rider/driver can efficiently recover the secret keys of all other riders and drivers. Also, we show that any rider/driver will be able to learn the location of any rider. The attacks are based on our cryptanalysis of the modified Paillier cryptosystem proposed by Nabeel et al. that forms a building block for both the above protocols.
\end{abstract}

%
%
%
\begin{keywords}
Privacy, Publish-Subscribe System, Ride-Hailing Service, Homomorphic Encryption, Modified Paillier Cryptosystem, lpRide
\end{keywords}

\section{Introduction}

Publish-Subscribe systems (also written as pub/sub systems) are a well-known paradigm 
to disseminate information among multiple parties in a distributed and asynchronous manner.
Subscribers subscribe for content from publishers, who create content and push notifications to the 
intermediaries' network (aka. brokers). Brokers route the content to subscribers based on their 
subscription. A Context-Aware Publish-Subscribe System (CA-PSS) extends a pub/sub system by 
taking into account the subscriber context. For instance, the context could be the location of a subscriber
in a traffic information service, and the content from a publisher could correspond to the 
traffic situation in the neighbourhood of the subscriber. Hence, the context of a subscriber 
could change frequently over time. It is important to protect the confidentially of the context, 
subscription and notification. The goal of a Privacy-Preserving CA-PSS (PP-CA-PSS) is to protect the above
confidential information. The scheme from Nabeel et al. \cite{NabeelABB13} is one of the early 
proposals of a privacy-preserving CA-PSS. 
There have many works on this topic and we refer
to \cite{munster2018,cui2019} and references therein for more details.

On the other hand, a Ride-Hailing Service (RHS) too provides location-based services. RHSs have become increasingly popular in recent years. Uber, Ola, 
Lyft, Didi, Grab, etc. are some popular RHSs. 
With these services also comes the
risk of misuse of personal data. The Ride matching Service Providers (RSPs) collect personal information
regarding the riders and drivers along with their ride statistics. There have been many instances 
of violation of individual privacy of the users using these RHSs \cite{norton_databreach}.
A Privacy-Preserving RHS (PP-RHS) aims to provide privacy guarantees to the users of the RHS, namely, 
riders and drivers. Recent years have witnessed many proposals of PP-RHS \cite{Pham2017PrivateRideAP,ORidePaper,khazbak,raza_bride,he_ridesharing,pRideLuo,lpRideYu,wangTrace,yuPSRIde}.
Recently, the ORide RHS's \cite{ORidePaper} security was revisited in \cite{DeepakMV19}.
  
The focus of this work is on the PP-CA-PSS from \cite{NabeelABB13} and the
 lpRride PP-RHS proposed by Yu {et} al. \cite{lpRideYu}. These two seemingly disparate protocols share
 the following common features: a) both offer or potentially can offer location-based services, and 
 b) both the protocols are based on the modified Paillier cryptosystem that was proposed in 
 \cite{NabeelABB13}. The modified Paillier cryptosystem 
 (yet another variant of the Paillier cryptosystem \cite{Paillier99}) is an additively 
homomorphic (digital signature-like) cryptosystem and both the protocols 
use this cryptosystem to blind the 
subscriptions and notifications (resp. locations of riders and taxis) but 
still can perform subscription-notification matching (resp. ride-matching) on the blinded data using the 
additive homomorphism property. 
 

\subsection{Our Contribution}
In this work, we analyse the security of the PP-CA-PSS from \cite{NabeelABB13}  
and the lpRide protocol. For the PP-CA-PSS we demonstrate an attack where any 
entity in the system including the broker can fully learn the confidential subscriptions of the 
subscribers. This invalidates
the claim in \cite{NabeelABB13} that subscriptions remain confidential. We would like to note
that we do not target notifications from publishers or the content. Yet, subscriptions
can leak confidential information such as locations of subscribers.

On the lpRide protocol we exhibit a key recovery attack 
by any rider or driver that can efficiently recover the secret keys of all other riders and drivers. 
Also, we show that any rider or driver 
will be able to learn the location of any rider. We were unable to recover drivers' 
locations as they are blinded by random values.

All our adversaries are honest-but-curious. The basis of our attack is our cryptanalysis of the modified 
Paillier cryptosystem mentioned above. In particular, we show that anyone will be able to forge the ``signatures'' and that these signatures are deterministic.
Hence, this result
is of independent interest to the security of protocols that are based on the modified Paillier 
cryptosystem. It is somewhat surprising that a simple attack on this cryptosystem went unnoticed 
despite many follow-up works of \cite{NabeelABB13}.

In Section \ref{sec:modified}, we recall and then cryptanalyse the modified Paillier cryptosystem. In 
Section \ref{sec:capss}, we briefly recall the PP-CA-PSS, and then describe our attack. 
In Section \ref{sec:lpride}, we briefly recall the lpRide protocol, and then describe our attack.
Section \ref{sec:con} concludes the paper.
\section{Cryptanalysis of Modified Paillier Cryptosystem}
\label{sec:modified}

\subsection{Recall of the Modified Paillier Cryptosystem}
As mentioned earlier, 
the modified Paillier  cryptosystem 
was proposed in \cite{NabeelABB13}. A preliminary use of this cryptosystem was already made in 
\cite{NabeelSB12}. As the name suggests, this cryptosystem is a variant of the Paillier encryption scheme 
\cite{Paillier99}. It consists of the following three algorithms:
\begin{itemize}
\item \textbf{Key generation}: choose two distinct large primes $p$ and $q$. Compute $N = p \cdot q$ and 
$\lambda = \mathrm{lcm}(p-1,q-1),$ the Carmichael function of $N$. Randomly sample a base 
$g_p \unif{\$} \Z{N^2}{*}$, such that the order of $g_p$ is a multiple of $N$. The latter condition can be 
ensured by checking the condition
\[
\gcd(L(g_p^\lambda \pmod{N^2}), \; N) = 1,
\]   
where 
\begin{equation}
\label{eq:l}
L(x) = \frac{x-1}{N},
\end{equation}
for
\[
x \in \{y\;|\;y \in \Z{\ge 0}{},\; y < N^2,\; y \equiv 1 \pmod{N}\}.
\]
Compute
\begin{equation}
\label{eq:mu}
\mu = \left( L(g_p^\lambda \pmod{N^2}) \right)^{-1} \pmod{N}.
\end{equation}
The public ``decryption'' key is
\[
PK'=(N,g_p,\mu),
\]
and the secret ``encryption'' key is 
\[
SK' = \lambda.
\]
\item \textbf{Encryption} $E'(m,r,SK')$: let the plaintext $m \in \Z{N}{}$. Sample a random value 
$r \unif{\$} \Z{N}{*}$. Compute the ``ciphertext'' 
\begin{equation}
\label{eq:enc}
c = g_p^{m\lambda}r^{N\lambda} \pmod{N^2}.          	 
\end{equation}
When the randomness and the secret key is implicit from the context, we simply denote the encryption
of a message $m$ as $E'(m)$. 

\item \textbf{Decryption} $D'(c,PK')$: compute the plaintext
\begin{equation}
\label{eq:dec}
m = L(c \pmod{N^2}) \cdot \mu \pmod{N}.          	 
\end{equation}

\end{itemize}

The scheme described above resembles a digital signature scheme more than an encryption scheme, but we will 
follow the terminology from the previous works. Note that in the original Paillier scheme 
\cite{Paillier99}, the public key is $(N,g_p)$ and the secret key is $(\lambda,\mu)$. 
Hence, in the modified 
scheme described above $\mu$ is made a public parameter in order to make the decryption algorithm 
a) to be publicly computable (that is, make it like the verification algorithm of a digital signature scheme), and 
b) more efficient. The rationale to make $\mu$ public is the claim in \cite{NabeelABB13} that it is hard to
compute the discrete logarithm of $\mu$ w.r.t. base $g_p$ to obtain $\lambda$. 
Also, note that the modified scheme is also additively homomorphic, i.e., 
\[
E'(m_1 + m_2,r_1r_2) = E'(m_1,r_1) \cdot E'(m_2,r_2),
\]
and 
\[
{E'}^{m_2}(m_1,r_1) = E'(m_1 m_2,r_1^{m_2}).
\]

\subsection{Cryptanalysis}

We now show that the modified Paillier scheme is insecure. Namely, an adversary having access only to the 
public key can produce encryptions of messages, contrary to the claims in \cite{lpRideYu,NabeelABB13}
that only those who possess the secret key would be able to encrypt. Moreover, we also show that the 
ciphertexts are deterministic.   
  
\begin{lemma}
\label{lem:det}
The ciphertexts of the modified Paillier cryptosystem (see \eqref{eq:enc}) are deterministic. 
\end{lemma}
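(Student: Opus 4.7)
The plan is to observe that the randomizing factor $r^{N\lambda} \pmod{N^2}$ in the encryption formula \eqref{eq:enc} is actually always $1$, so the ciphertext depends only on the message $m$ (and the fixed public parameters $g_p, \lambda, N$).

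First, I would recall the structure of $\mathbb{Z}_{N^2}^{*}$. Its Carmichael exponent is $\lambda(N^2) = N \cdot \lambda(N) = N\lambda$, since $N = pq$ with $p,q$ distinct primes and $\lambda = \mathrm{lcm}(p-1,q-1)$; equivalently, $\varphi(N^2) = N\varphi(N)$ and one checks that every element of $\mathbb{Z}_{N^2}^{*}$ is killed by $N\lambda$.

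Next, I would argue that although the encryption samples $r \in \mathbb{Z}_{N}^{*}$, this same integer $r$, viewed in $\mathbb{Z}_{N^2}$, lies in $\mathbb{Z}_{N^2}^{*}$: indeed $\gcd(r,N)=1$ immediately implies $\gcd(r,N^2)=1$. Therefore by the previous paragraph
\[
r^{N\lambda} \equiv 1 \pmod{N^2}.
\]
Substituting this into \eqref{eq:enc} gives
\[
c \;=\; g_p^{m\lambda} \cdot r^{N\lambda} \;\equiv\; g_p^{m\lambda} \pmod{N^2},
\]
which is independent of the randomness $r$. Hence for every fixed key $(PK', SK')$ the map $m \mapsto c$ is a function, i.e.\ the ciphertexts are deterministic.

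There is no real obstacle here; the only thing to be careful about is the distinction between $r$ sampled from $\mathbb{Z}_N^{*}$ and the element $r \bmod N^2$ used inside $\mathbb{Z}_{N^2}^{*}$, which is what justifies invoking the exponent $N\lambda$ of $\mathbb{Z}_{N^2}^{*}$. Once this is spelled out, the determinism is immediate and, as the paper remarks, already suggests why publishing $\mu$ (and thus allowing anyone to recompute $g_p^{m\lambda}$ from any known plaintext-ciphertext pair) will be problematic for the scheme's claimed confidentiality.
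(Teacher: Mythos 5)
Your proof is correct and follows essentially the same route as the paper: both reduce the claim to the fact that $r^{N\lambda}\equiv 1 \pmod{N^2}$ (the paper cites Paillier's original analysis for this, while you spell out the group-exponent argument and the lift of $r$ from $\mathbb{Z}_N^{*}$ to $\mathbb{Z}_{N^2}^{*}$), and then conclude $c = g_p^{m\lambda} \pmod{N^2}$ is independent of $r$. The only nitpick is that $\lambda(N^2)$ need not equal $N\lambda$ exactly (it may be a proper divisor), but all that is needed is that $N\lambda$ is a multiple of the exponent of $\mathbb{Z}_{N^2}^{*}$, which your argument does establish.
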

\begin{proof}
Since $c = g_p^{m\lambda}r^{N\lambda} \pmod{N^2}$, by the properties of the Carmichael function
$\lambda$,  we have $r^{N\lambda} \equiv 1 \pmod{N^2}$ (see for e.g., \cite[pp. 2]{Paillier99}). Hence $c = \left( g_p^{\lambda} \right)^m 
\pmod{N^2}$ and is independent of $r$. 
\qed
\end{proof}
 
Next, we show how to efficiently compute $g_p^{\lambda} \pmod{N^2}$ from the public key.
\begin{lemma}
\label{lem:gpl}
$g_p^{\lambda} \pmod{N^2} = N \cdot (\mu^{-1} \pmod{N}) + 1$. 
\end{lemma}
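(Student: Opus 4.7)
The plan is to simply invert the definition of $\mu$ given in \eqref{eq:mu} and recover $g_p^\lambda \pmod{N^2}$ directly. First I would set $x := g_p^\lambda \pmod{N^2}$ and argue that $L(x) = (x-1)/N$ is well-defined as an integer, i.e., $x \equiv 1 \pmod{N}$. This follows from the fact that $g_p$ lies in $\mathbb{Z}_{N^2}^*$ and $\lambda$ is the Carmichael function of $N$, so reducing modulo $N$ gives $g_p^\lambda \equiv 1 \pmod{N}$ by Carmichael's theorem applied to $\mathbb{Z}_N^*$. Thus $x$ lies in the set on which $L$ is defined in the paper.

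Next I would write $x = kN + 1$ with $k = L(x) \in \{0, 1, \dots, N-1\}$. The defining relation \eqref{eq:mu} says $\mu \equiv L(x)^{-1} \pmod{N}$, so inverting gives $L(x) \equiv \mu^{-1} \pmod{N}$. Since $L(x)$ is already reduced modulo $N$, this is an equality: $k = \mu^{-1} \pmod{N}$. Substituting back yields
\[
g_p^\lambda \pmod{N^2} \;=\; x \;=\; N\cdot(\mu^{-1} \pmod N) + 1,
\]
as claimed.

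There is essentially no obstacle here. The only point that needs even a moment's thought is checking that $\mu^{-1} \pmod N$ exists, which is exactly the $\gcd$ condition $\gcd(L(g_p^\lambda \pmod{N^2}), N) = 1$ imposed in the key generation step, so this is guaranteed by construction. The whole content of the lemma is really just the observation that once $\mu$ is published, the quantity $g_p^\lambda \pmod{N^2}$ is publicly recoverable by a one-line computation, which is what makes the subsequent attacks possible.
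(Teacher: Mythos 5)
Your proof is correct and follows the same route as the paper: invert the defining relation \eqref{eq:mu} to get $L(g_p^\lambda \bmod N^2) \equiv \mu^{-1} \pmod N$, observe that $L(g_p^\lambda \bmod N^2)$ is already a reduced residue, and unfold the definition of $L$. You supply slightly more detail than the paper (the Carmichael-theorem justification that $L$ is defined on $g_p^\lambda \bmod N^2$, and the $\gcd$ condition guaranteeing $\mu$ is invertible), but the argument is the same.
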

\begin{proof}
From \eqref{eq:mu}, 
\[
L(g_p^\lambda \pmod{N^2})\; \equiv\; \mu^{-1} \pmod{N}.
\]
Since $1 \le L(g_p^\lambda \pmod{N^2}) < N$ , the lemma follows from the definition of the $L$ function 
in \eqref{eq:l}. 
\qed
\end{proof}
From Lemmas \ref{lem:det} and \ref{lem:gpl} and their proofs, we have
\begin{corollary}
\label{cor:encm}
$E'(m) = (N \cdot (\mu^{-1} \pmod{N}) + 1)^m \pmod{N^2}$.
\end{corollary}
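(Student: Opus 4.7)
The plan is to simply combine the two preceding lemmas, since the corollary is essentially their composition. First, I would invoke Lemma \ref{lem:det}, which tells us that the encryption formula \eqref{eq:enc} collapses to
\[
E'(m) \;=\; \bigl(g_p^{\lambda}\bigr)^m \pmod{N^2},
\]
with no dependence on the randomness $r$. This is the step that lets me speak of $E'(m)$ as a well-defined function of $m$ alone, which is required for the statement to even make sense as written.

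Next, I would substitute the closed-form expression for $g_p^{\lambda} \pmod{N^2}$ supplied by Lemma \ref{lem:gpl}, namely $g_p^{\lambda} \equiv N \cdot (\mu^{-1} \bmod N) + 1 \pmod{N^2}$. Raising both sides to the $m$-th power modulo $N^2$ and chaining with the previous display yields
\[
E'(m) \;=\; \bigl(N \cdot (\mu^{-1} \bmod N) + 1\bigr)^{m} \pmod{N^2},
\]
which is exactly the claim. There is no genuine obstacle here: the only thing to be a little careful about is that the identity from Lemma \ref{lem:gpl} holds modulo $N^2$ (not just modulo $N$), so that raising to the $m$-th power preserves the congruence in $\mathbb{Z}_{N^2}$. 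Since both lemmas are already proved modulo $N^2$, this is automatic, and the corollary follows in one line.
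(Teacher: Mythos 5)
Your proposal is correct and is exactly the argument the paper intends: Lemma \ref{lem:det} reduces $E'(m)$ to $(g_p^{\lambda})^m \pmod{N^2}$, and Lemma \ref{lem:gpl} supplies the closed form for $g_p^{\lambda} \pmod{N^2}$, which is then raised to the $m$-th power. The paper itself states the corollary as following ``from Lemmas \ref{lem:det} and \ref{lem:gpl} and their proofs'' with no further argument, so your one-line composition is the same route.
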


Hence, anyone with access to the public key can easily produce the (unique) ciphertext corresponding to a 
given message without knowing the secret key (or equivalently, will be able to forge the signature). Note 
that we were able to compute $g_p^{\lambda} \pmod{N^2}$ without explicitly computing $\lambda$. 
This constitutes a complete break of the modified Paillier cryptosystem. 

\section{Attack on the PP-CA-PSS from \cite{NabeelABB13}}
\label{sec:capss}

\subsection{Recall of the scheme}

For completeness, we briefly recall the relevant steps of the privacy-preserving
context-aware publish-subscribe system from \cite{NabeelABB13} that are necessary
to understand our attack. 
For other details, we refer the reader to 
\cite{NabeelABB13} and its full version \cite{fullNabeelABB13}. This protocol consists of the following types of entities: 
\begin{itemize}
\item\textbf{Context manager}: it is a trusted third party (TTP) responsible for initialising the system 
and registering other entities. There is a context manager for each context (e.g. location) and the 
context could change frequently. It provides secret keys to publishers (resp. subscribers) to encrypt 
(resp. decrypt) the content payload during the initialisation phase and there will be no further 
interaction with publishers/subscribers unless the system must be reinitialised. 
\item\textbf{Publisher}: owner of the messages/content that they like to publish and notify the subscribers.
\item\textbf{Subscriber}: entity interested to subscribe for the content from the publishers.
\item\textbf{Broker}: intermediary that matches the blinded/encrypted notifications and subscriptions, and
if there is a match, then forward the encrypted message to the corresponding subscriber.   
\end{itemize}

The threat model assumed in \cite{NabeelABB13} is that the context manager is fully trusted. 
The brokers are assumed to be honest-but-curious, i.e., they honestly follow the protocol but are curious to 
learn the confidential notifications and subscriptions. Publishers are expected to not collude
with any other entity and to follow the protocol honestly. Subscribers are not trusted. Brokers 
may collude with one another and also collude with malicious subscribers. 

The steps of the protocol are as follows:
\begin{itemize}
\item\textbf{System initialisation}: 
The context manager generates the parameters of the modified Paillier cryptosystem. It maintains
a set of contexts $\mathcal{C}$. Each context $C_i \in \mathcal{C}$ is a tuple
$C_i = ( \lambda_i, \mu_i, t_i, r_i )$, where $\lambda_i$ and $\mu_i$ are modified
Paillier parameters described in Section \ref{sec:modified}. Implicit in the description is the 
modulus $N_i$ that is different for each context, and $t_i,r_i \unif{\$}\Z{N_i}{}$ are random values.
Brokers only match notifications with subscriptions within the same context. The parameters $\lambda_i$
and $t_i$ are private to the context manager but $\mu_i$ (also $N_i$) are public.
  
The context manager also deploys an Attribute-based Group Key Management (AB-GKM) scheme \cite{NabeelB14} to manage
the secret keys issued to the subscribers that is used to decrypt the payload message of the notification. 
An AB-GKM scheme enables group key management while also enabling fine-grained access control among a 
group of users each of whom
is identified by a set of attributes. Subscribers need to provide their identity attributes (in an oblivious 
manner) to the context manager and obtain the secrets that is also shared with the subscribed publishers
by the context manager. 
These secrets are later used to 
derive the secret encryption key for the publisher to encrypt the payload content, and also derive decryption keys 
 for the subscribers. Since in this work we do not target the encrypted 
payload content, we omit the details corresponding to the AB-GKM scheme.
\vspace{0.25cm}
\item\textbf{Subscriber registration}: 
 every subscriber registers with the context manager. A subscriber with context $C_i$ receives the 
following parameters during the registration:
\[
(  E'(-r_i),\; E'(-1),\; g^{-t_i}E'(-r_i) ),
\]
where $g \unif{\$}\Z{N^2}{*}$. The subscriber uses these parameters to ``blind'' its subscriptions.
Since $\mu_i$ is public, the subscriber may decrypt $E'(-r_i)$ using
the decryption procedure $D'$ to obtain $r_i$. Note that the parameter $r_i$ is common to all the 
subscribers within a given context. It is claimed in  \cite{NabeelABB13} that the subscriber can neither 
recover $g^{-t_i}$ nor $t_i$ from $g^{-t_i}E'(-r_i)$. 
Needless to say, it was believed until now that the ciphertexts $E'(-r_i)$ are randomised. 

As briefly mentioned above, the subscribers also receive their secret of the AB-GKM scheme depending on the 
identity attributes they possess but without revealing them to the context manager. 
\vspace{0.25cm}
\item\textbf{Publisher registration}: 
every publisher too registers with the context manager. A publisher with context $C_i$ receives the 
following parameters during the registration:
\[
( E'(r_i),\; E'(1),\; g^{t_i}E'(r_i) ).
\] 
As in the case of subscribers, the publisher uses these parameters to blind its notifications.
In addition to the modified Paillier parameters, the publishers also receive the set of secrets for the 
AB-GKM scheme issued to the subscribers, from the context manager. The publisher uses these secrets to derive a secret encryption key
to selectively encrypt the payload of the notification of a subscriber depending on the latter's subscription. 
\vspace{0.25cm}
\item\textbf{Notifications}: 
in this system, every notification and subscription is represented as a Boolean expression over a set of 
attribute/value pairs. The publisher blinds a value $v$ $(0 \le v < 2^\ell \ll N_i) $ for an attribute $a$ 
as follows:
\begin{align}
\label{eq:not}
v' = &\; g^{t_i} \cdot E'(r_i) \cdot E'(r_i(v-1)) \cdot E'(r_v) \nonumber  \\
   =  &\; g^{t_i} \cdot E'(r_iv+r_v),
\end{align}
where $r_v$ is a suitably sampled random value in $\Z{N_i}{}$. For reasons that will be more clear 
when describing the matching phase, the value $r_v$ is chosen such that 
$ 0 \le r_i(v-x)+r_v \leq N_i/2$ for $x \in \{0,1,\ldots,v\}$, else $ N_i/2 < r_i(v-x)+r_v  < N_i $
for $x \in \{v+1,v+2,\ldots,2^\ell-1\}$.

The publisher also generates the encryption key $k$ of the 
AB-GKM scheme using the secret it received from the context manager for the corresponding subscriber.
The publisher then encrypts the payload (denoted as $\mathit{payload}$) of the notification 
$\langle a_i = v_i \rangle_{i \in I}$ as $\mathcal{E}_k(\mathit{payload})$.     
\vspace{0.25cm}
\item\textbf{Subscriptions}: 
suppose a subscriber wants to subscribe for the attribute $a$ with the value $x$. It blinds
$x$ as follows:
\begin{align}
\label{eq:sub}
x' = &\; g^{-t_i} \cdot E'(-r_i) \cdot E'(r_i(1-x)) \nonumber  \\
   =  &\; g^{-t_i} \cdot E'(-r_ix).
\end{align}
For each such attribute/value pairs $(a,x)$, a tuple $(a,x',\alpha)$, where $\alpha \in \{<,\geq\} $
is sent to the broker. Each such atomic subscription thus allows a homomorphic greater/less than 
comparison between the value and the notification. These atomic subscriptions are combined
using a Boolean formula to create the complete subscription.  
\vspace{0.25cm}
\item\textbf{Broker matching}: 
the brokers are assumed to know the public parameters for all the contexts $C_i$. Suppose a broker
receives the blinded subscription $x'$ and the blinded notification $v'$ for the attribute $a$.
It first computes $x'\cdot v'$ and, since the blinding values $g^{t_i}$ and $g^{-t_i}$ get cancelled, the product is now a typical
modified Paillier ciphertext that can be decrypted using the public parameters. It then decrypts
the ciphertext to obtain the randomised difference between the original values $x$ and $v$ as follows:
\begin{equation}
\label{eq:randdiff}
d' = D'(x'\cdot v') = r_i(v-x)+r_v.
\end{equation} 
The broker decides $v\geq x$ if $d' \le N_i/2$, otherwise $v<x$. For a full/composite 
subscription, the broker evaluates the Boolean formula. After successful matching, it forwards the 
encrypted payload $\mathcal{E}_k(\mathit{payload})$ to the subscriber. The subscriber having
valid credentials will be able to derive the secret key $k$ and decrypt the payload ciphertext.

\end{itemize}

\subsection{Attack on the scheme}

The basis of our attack on the PP-CA-PSS scheme from \cite{NabeelABB13} is our attack on the 
modified Paillier cryptosystem that we presented
in Section \ref{sec:modified}. 
First, we show that any registered publisher or subscriber will be able to compute the blinding value
$g^{t_i}$ used to hide notifications and subscriptions, respectively.  
\begin{lemma}
\label{lem:blind}
Within a context $C_i$, any publisher or subscriber will be able to efficiently compute $g^{t_i}$.  
\end{lemma}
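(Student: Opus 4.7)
The plan is to exploit the two key consequences of our cryptanalysis of the modified Paillier cryptosystem, namely (i) ciphertexts are deterministic (Lemma \ref{lem:det}) and (ii) anyone holding the public parameters $(N_i,\mu_i)$ can compute $E'(m)$ for an arbitrary message $m$ using the explicit formula in Corollary \ref{cor:encm}. Since the public parameters $N_i$ and $\mu_i$ of context $C_i$ are available to every publisher and subscriber registered in that context, both types of entity are in a position to manufacture any ciphertext under $E'$ on their own.

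I would then handle the publisher case first because it is the cleaner one. A publisher in context $C_i$ receives the triple $(E'(r_i),\; E'(1),\; g^{t_i}E'(r_i))$. Using Corollary \ref{cor:encm} (or simply the homomorphic property, since $E'(0)=1$ forces $E'(-r_i)=E'(r_i)^{-1}\bmod N_i^2$), the publisher computes $E'(-r_i)$ directly from the public parameters. Multiplying the third component of the triple by this value yields
\[
\bigl(g^{t_i}E'(r_i)\bigr)\cdot E'(-r_i) \;=\; g^{t_i}\cdot E'(0) \;=\; g^{t_i}\pmod{N_i^2}.
\]

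For the subscriber case, I would follow the same recipe after a preliminary decryption step. The subscriber receives $(E'(-r_i),\; E'(-1),\; g^{-t_i}E'(-r_i))$. Since $\mu_i$ is public, the subscriber applies the decryption procedure $D'$ to the first component to recover $r_i\bmod N_i$; alternatively, it directly computes $E'(r_i)$ via Corollary \ref{cor:encm} from any candidate value, using the determinism guaranteed by Lemma \ref{lem:det} to verify consistency. Either way, the subscriber obtains $E'(r_i)$, multiplies $g^{-t_i}E'(-r_i)$ by it to obtain $g^{-t_i}$, and then inverts modulo $N_i^2$ to recover $g^{t_i}$.

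There is essentially no obstacle: the whole lemma reduces to the observation that the ``blinding'' factor $E'(\pm r_i)$ attached to $g^{\pm t_i}$ is a publicly computable value once our earlier results are in hand, so it can simply be stripped off. The only point worth stressing in the write-up is the reliance on Corollary \ref{cor:encm}, which is precisely what invalidates the designers' belief that the randomised ciphertext $E'(\pm r_i)$ could serve as a hiding mask for $g^{\pm t_i}$.
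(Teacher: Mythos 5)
Your proposal is correct and rests on the same essential observation as the paper's proof: by Lemma \ref{lem:det} the ciphertext $E'(\pm r_i)$ is a fixed, known quantity (indeed it is handed to the entity as the first component of its registration tuple), so the third component $g^{\pm t_i}E'(\pm r_i)$ can simply be divided by it to strip off the mask. The paper's proof does exactly this division directly, whereas you take a small detour through decryption and Corollary \ref{cor:encm}; the two routes are equivalent.
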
 
\begin{proof}
Consider the case of a subscriber who receives the tuple 
$(  E'(-r_i),\allowbreak\; E'(-1),\allowbreak\; g^{-t_i}E'(-r_i) )$
from the context manager during its registration.
From Lemma \ref{lem:det}, the value of $E'(-r_i)$ is  unique and so is the value $g^{-t_i} \cdot \allowbreak E'(-r_i)$.
Hence by dividing the latter by the former, the subscriber can easily recover the value $g^{-t_i}$.
Analogously, the publisher too can recover the blinding value.
\qed  
\end{proof}
Hence, the claim in  \cite{NabeelABB13} that publishers/subscriber cannot efficiently recover $g^{t_i}$ is incorrect.

\begin{corollary}
\label{cor:broker}
Within a context $C_i$, any broker, by colluding with a subscriber, will be able to efficiently recover 
the subscription $x$ from its blinded subscription $x'$. 
\end{corollary}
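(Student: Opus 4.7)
The plan is to combine Lemma~\ref{lem:blind} with the fact that $\mu_i$ is public so any subscriber can invoke the public decryption routine $D'$. The recipe for the colluding broker-subscriber pair is short: the subscriber first runs the procedure of Lemma~\ref{lem:blind} to extract $g^{-t_i}$ from its registration tuple, and separately applies $D'$ to $E'(-r_i)$ to recover $-r_i \bmod N_i$, and hence $r_i$ itself. Both values are then shared with the broker.

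Next I would use the algebraic form of a blinded subscription, namely
\[
x' \;=\; g^{-t_i}\cdot E'(-r_i x) \pmod{N_i^2},
\]
as derived in~\eqref{eq:sub}. The broker multiplies $x'$ by $g^{t_i} = (g^{-t_i})^{-1} \bmod N_i^2$, cancelling the blinding factor to obtain a genuine modified Paillier ciphertext $E'(-r_i x)$. Since the modified scheme is deterministic by Lemma~\ref{lem:det} and decryption is public, the broker runs $D'$ to recover the integer $-r_i x \bmod N_i$. Multiplying by $-r_i^{-1} \bmod N_i$ yields $x \bmod N_i$, and because $0 \le x < 2^\ell \ll N_i$, the residue is the subscription value $x$ itself.

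The only ingredient beyond routine manipulation is checking that $r_i$ is invertible modulo $N_i$; since $N_i = p_i q_i$ with $p_i,q_i$ large primes and $r_i$ is a random element of $\mathbb{Z}_{N_i}$, the probability that $\gcd(r_i,N_i) \neq 1$ is negligible, and in the degenerate case a fresh protocol run (or a different subscriber sharing its own copy of $r_i$) suffices. I expect no real obstacle here: the hardest part was already handled in Section~\ref{sec:modified}, where the determinism of $E'$ and the publishability of $g^{-t_i}$ were established; the present corollary is essentially a bookkeeping exercise that puts those pieces together.
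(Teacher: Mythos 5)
Your proposal is correct and follows the same route as the paper: use Lemma~\ref{lem:blind} to recover $g^{-t_i}$, obtain $r_i$ by publicly decrypting $E'(-r_i)$, strip the blind from $x'$, decrypt $E'(-r_i x)$, and divide out $-r_i$. You merely spell out the bookkeeping (including the negligible-probability non-invertibility of $r_i$) that the paper leaves implicit.
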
 
The above corollary follows immediately from the above lemma since a broker can collude with any
subscriber and learn the blinding value $g^{t_i}$ (and hence $g^{-t_i}$) and the parameter $r_i$, and then remove the blinds from the blinded subscription $x'$ (see \eqref{eq:sub}) and eventually decrypt 
the ciphertext $E'(-r_ix)$ to recover $x$.  

The broker can also attempt to recover the notifications from publishers but since these values $v$ are 
blinded by the random values $r_v$ (see \eqref{eq:not}), we do not know how to recover the 
notifications. As already observed in \cite{NabeelABB13}, there will be a small leakage on the value
of $v-x$ leaked by the randomised difference $d'$ (see \eqref{eq:randdiff}) since $r_v$ is not sampled 
from the uniform distribution. Since we now know $x$, this directly translates to a small leakage on the 
value of the notification $v$ itself. Also, note that we do not get learn anything about the payload
of the notification.

\section{Attack on the lpRide RHS Protocol}
\label{sec:lpride}

\subsection{Recall of lpRide}

For completeness, we briefly recall the relevant steps of the lpRide RHS protocol that are necessary
to understand our attack. 
For other details, we refer to 
\cite{lpRideYu}. The lpRide protocol consists of the following entities: 
\begin{itemize}
\item\textbf{Authority}: responsible for the registration of riders and taxis/drivers.
\item\textbf{RSP}: the ride matching service provider that provides the online service of matching
riders' encrypted locations with the ``nearest'' taxis whose locations too are encrypted.
\item\textbf{Rider}: a rider $u$ provides the encryption of its location $l_u$ to the RSP to request a nearby taxi.
\item\textbf{Taxi} (\textit{aka.} driver): a taxi $t_k$ provides the encryption of its location $l_{t_k}$
to the RSP and waits for a match with its potential client.   
\end{itemize}

The threat model assumed in \cite{lpRideYu} is that the authority is fully trusted and that its 
communications with other entities happen over authenticated channels. The riders and taxis
are honest-but-curious
so is the  RSP. 
Further, it is assumed that
the RSP will not collude with any rider or taxi. 

The steps of the lpRide protocol are as follows:
\vspace{0.25cm}
\begin{itemize}
\item\textbf{System initialisation}: the RSP prepares the road network embedding \cite{ShahabiKS03} and 
partitions it into zones. The authority initialises the system parameters 
$(PK,SK) \unif{} \textrm{Init}_{\textrm{auth}}(\ell,\kappa)$, 
and by using the 
modified Paillier cryptosystem generates and broadcasts its public key
\begin{equation}
\label{eq:pklpr}
PK = (N,g_p,g,\bm{\mu},\ell,\kappa),
\end{equation}
where $\ell$ is the bit length of the location coordinates in the road network embedding, 
$\kappa-1 \le \log_2 N - 2 $ is the size of the random values used to blind the location coordinates of drivers, and 
$g \unif{\$} \Z{N^2}{*}$ is a random base, and $\bm{\mu} = \langle \mu_i \rangle_{0 \le i \le \omega}$
 is a $\omega$-tuple of modified Paillier cryptosystem public parameter $\mu$.
The secret key is 
\[
SK = (\bm{\lambda},\bm{\epsilon},\bm{\xi}),
\]
where $\bm{\lambda} = \langle \lambda_i \rangle_{0 \le i \le \omega}$
is a $\omega$-tuple of modified Paillier cryptosystem secret parameter $\lambda$. 
The parameters $\bm{\epsilon} = \langle \epsilon_i \rangle_{1 \le i \le \omega}$ and 
$\bm{\xi} = \langle \xi_i \rangle_{1 \le i \le \omega}$ are $\omega$-tuples of random values from 
$\Z{N}{}$.

\begin{remark}
\label{rem:n}
In the lpRide protocol, the same modulus $N$, and the same bases $g_p$ and $g$ are suggested 
to be used for all the coordinates $1 \le i \le \omega$. In this case, $\lambda_i$ are all identical, 
so are all $\mu_i$. But the protocol can be easily generalised to use different moduli and bases (and our attack applies for this variant too).
\end{remark}
\vspace{0.25cm}
\item\textbf{Rider and Taxi registration}:
\begin{itemize}
\vspace{0.15cm}
\item\textbf{Rider registration}: the authority assigns a rider $u$ its secret key 
$sk_u \unif{} \textrm{RiderKeyGen}(u,PK,SK)$. It chooses a set of random integers  
$\langle r_i \rangle_{0 \le i \le \omega}$ from $\Z{N}{*}$, and computes
\begin{align*}
E'(-1) & =  g_p^{-\lambda_0}r_0^{\lambda_0N} \pmod{N^2},\\
g^{-\epsilon_i}E'(-1) & =  g^{-\epsilon_i}g_p^{-\lambda_i}r_i^{\lambda_iN} \pmod{N^2},\\
g^{-\xi_i}E'(-1) & =  g^{-\xi_i}g_p^{-\lambda_i}r_i^{\lambda_iN} \pmod{N^2}.\\
\end{align*}
Then, the secret key of the rider is
\begin{equation}
\label{eq:sku}
sk_u =   (E'(-1), \langle g^{-\epsilon_i}E'(-1) \rangle_{1 \le i \le \omega}, 
         \langle g^{-\xi_i}E'(-1) \rangle_{1 \le i \le \omega} ).
\end{equation}
Note that $g^{-\epsilon_i}$ and $g^{-\xi_i}$ serve as blinding values. 
It is claimed in  \cite{lpRideYu}, 
as was done in \cite{NabeelABB13}, that the rider can neither recover $g^{-\epsilon_i}$ (resp. $g^{-\xi_i}$) 
nor ${\epsilon_i}$ (resp. ${\xi_i}$) from $g^{-\epsilon_i}E'(-1)$ (resp. $g^{-\xi_i}E'(-1)$). 
\vspace{0.15cm}
\item\textbf{Taxi registration}: the authority assigns a taxi $t_k$ its secret key 
$sk_{t_k} \unif{} \textrm{TaxiKeyGen}(t_k,PK,SK)$. It chooses a set of random integers  
$\langle r'_i \rangle_{0 \le i \le \omega}$ from $\Z{N}{*}$, and computes
\begin{align*}
E'(1) & =  g_p^{\lambda_0}{r'}_0^{\lambda_0N} \pmod{N^2},\\
g^{\epsilon_i}E'(1) & =  g^{\epsilon_i}g_p^{\lambda_i}{r'}_i^{\lambda_iN} \pmod{N^2},\\
g^{\xi_i}E'(1) & =  g^{\xi_i}g_p^{\lambda_i}{r'}_i^{\lambda_iN} \pmod{N^2}.\\
\end{align*}
Then, the secret key of the taxi is
\begin{equation}
\label{eq:skt}
sk_{t_k} = (E'(1), \langle g^{\epsilon_i}E'(1) \rangle_{1 \le i \le \omega}, 
        \langle g^{\xi_i}E'(1) \rangle_{1 \le i \le \omega} ).
\end{equation}
\end{itemize}
\item\textbf{Ride request}: 
a rider $u$ generates an encrypted ride request $\widehat{R}_u \unif{} \textrm{ReqGen}\allowbreak(\bm{c}_{u}, sk_u)$ by 
computing its 
location $\bm{c}_{u} = \langle \bm{c}_{u}[i] \rangle_{1 \le i \le \omega}$ in the embedded road network 
encoding, where $0 \le \bm{c}_{u}[i] < 2^\ell$ is an integer. Then, two vectors of \textit{blinded} 
ciphertexts $\widehat{\bm{c}}_u^+$ and $\widehat{\bm{c}}_u^-$ are computed from $\bm{c}_u$ and $sk_u$ by element-wise multiplication.
\begin{itemize}
\item $\widehat{\bm{c}}_u^+ = \langle \widehat{\bm{c}}^+_u[i] \rangle_{1 \le i \le \omega}$, where
\begin{equation}
\label{eq:chatrp}
\widehat{\bm{c}}_u^+[i] = sk_u[2] \cdot sk_u[1]^{-\bm{c}_u[i]-1} = g^{-\epsilon_i} E'(\bm{c}_u[i]).
\end{equation} 

\item $\widehat{\bm{c}}_u^- = \langle \widehat{\bm{c}}^-_u[i] \rangle_{1 \le i \le \omega}$, where
\begin{equation}
\label{eq:chatrm}
\widehat{\bm{c}}_u^-[i] = sk_u[3] \cdot sk_u[1]^{\bm{c}_u[i]-1} = g^{-\xi_i} E'(-\bm{c}_u[i]).
\end{equation} 
  
\end{itemize}
The rider sends $\widehat{R}_u = (\widehat{\bm{c}}_u^+, \widehat{\bm{c}}_u^-,z_u)$ to the RSP, where $z_u$ 
is the identity of the zone of the rider.
\vspace{0.25cm}
\item\textbf{Taxi location update}: 
a taxi $t_k$ computes its encrypted updated location $\widehat{L}_{t_k} \unif{} \textrm{TaxiUpdate}(\bm{c}_{t_k}, 
sk_{t_k})$ by computing its 
location $\bm{c}_{t_k} = \langle \bm{c}_{t_k}[i] \rangle_{1 \le i \le \omega}$ in the embedded road network 
encoding, where $0 \le \bm{c}_{t_k}[i] < 2^\ell$ is an integer. 
Two vectors $\bm{r}_1 = \langle \bm{r}_1[i]\rangle_{1 \le i \le \omega}$ and 
$\bm{r}_2 = \langle \bm{r}_2[i]\rangle_{1 \le i \le \omega}$
of $\kappa-1$-bit random integers are sampled. These random values are used to blind the driver locations and, eventually, the differences of the distances between rider and drivers. 
Then, two vectors of \textit{blinded} 
ciphertexts $\widehat{\bm{c}}_{t_k}^+$ and $\widehat{\bm{c}}_{t_k}^-$ are computed from $\bm{c}_{t_k}$ and $sk_{t_k}$ by 
element-wise multiplication.
\begin{itemize}
\item $\widehat{\bm{c}}_{t_k}^+ = \langle \widehat{\bm{c}}^+_{t_k}[i] \rangle_{1 \le i \le \omega}$, where
\begin{align}
\label{eq:chattp}
\widehat{\bm{c}}_{t_k}^+[i] & = sk_{t_k}[3] \cdot sk_{t_k}[1]^{\bm{c}_{t_k}[i]-1+\bm{r}_1[i]} \nonumber\\ 
& = g^{\xi_i} E'(\bm{c}_{t_k}[i]+\bm{r}_1[i]).
\end{align} 

\item $\widehat{\bm{c}}_{t_k}^- = \langle \widehat{\bm{c}}^-_{t_k}[i] \rangle_{1 \le i \le \omega}$, where
\begin{align}
\label{eq:chattm}
\widehat{\bm{c}}_{t_k}^-[i] & = sk_{t_k}[2] \cdot sk_{t_k}[1]^{-\bm{c}_{t_k}[i]-1+\bm{r}_2[i]} \nonumber\\ 
& = g^{\epsilon_i} E'(-\bm{c}_{t_k}[i]+\bm{r}_2[i]).
\end{align} 
  
\end{itemize}
The taxi $t_k$ sends $\widehat{L}_{t_k} = (\widehat{\bm{c}}_{t_k}^+, \widehat{\bm{c}}_{t_k}^-,z_{t_k})$ to the RSP, where $z_{t_k}$ is 
the identity of the zone of the taxi.
\vspace{0.25cm}
\item\textbf{Ride matching}:
the RSP receives the ride request $\widehat{R}_u$ and the set of taxi locations 
$\{\widehat{L}_{t_k}\}_{t_k \in \mathcal{T}}$ and filters the list of taxis based on the zone information of the rider.
Then, it computes the products $\widehat{\bm{c}}_{t_k}^+[i]\widehat{\bm{c}}_{u}^-[i]$ 
and $\widehat{\bm{c}}_{t_k}^-[i]\widehat{\bm{c}}_{u}^+[i]$ for all $1 \le i \le \omega$. Note that 
the blinding values $g^{\xi_i}$ or $g^{\epsilon_i}$ or their inverses that are present in the individual ciphertexts gets 
cancelled upon multiplication of the ciphertexts, and hence the products are typical modified Paillier ciphertexts that can be
decrypted with the public key $PK$. Note that the resulting difference of plaintexts is still blinded by the
random values $\bm{r}_1$ and $\bm{r}_2$. A detailed procedure to recover the sign of the differences from
these decrypted blinded sums is given \cite{lpRideYu} and we do not recall it here as it is not needed for 
the present purpose.   	

\end{itemize}

\subsection{Attack on lpRide}

The basis of our attack on lpRide is again our attack on the modified Paillier cryptosystem that we presented
in Section \ref{sec:modified}. In particular, Corollary \ref{cor:encm} shows that any one with the 
knowledge of the public key $PK$ of lpRide will be able to compute the modified Paillier ciphertexts.
First, we show that any registered rider or a taxi will be able to compute the secret keys of all the
other riders and drivers.
\begin{lemma}
\label{lem:skid}
The secret keys of riders (resp. taxis) are identical. Moreover, knowing any one secret key will suffice
to compute all the remaining secret keys of riders and drivers. 
\end{lemma}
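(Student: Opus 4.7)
The plan is to leverage Lemma \ref{lem:det} (determinism of ciphertexts) together with the observation that the system-wide blinding exponents $\epsilon_i,\xi_i$ and the base $g$ are fixed once at initialisation, so the per-user ``randomness'' in $sk_u$ and $sk_{t_k}$ is illusory.

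First I would argue that all rider keys coincide. By inspecting \eqref{eq:sku}, the rider key consists of the three components $E'(-1)$, $g^{-\epsilon_i}E'(-1)$, and $g^{-\xi_i}E'(-1)$. By Lemma \ref{lem:det}, the $r_i$ sampled during $\mathrm{RiderKeyGen}$ drop out, so $E'(-1) \pmod{N^2}$ is a fixed element of $\Z{N^2}{*}$ independent of the rider. Since $g$, $\epsilon_i$ and $\xi_i$ belong to the (global) system secret $SK$ and do not depend on $u$, the factors $g^{-\epsilon_i}$ and $g^{-\xi_i}$ are also fixed across all riders. Hence every entry of $sk_u$ takes the same value for every registered rider $u$. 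The identical argument applied to \eqref{eq:skt} shows that every taxi key $sk_{t_k}$ is the same element, determined solely by $E'(1)$ and by $g^{\epsilon_i}, g^{\xi_i}$.

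Next I would cross the rider/taxi boundary. The key observation is that $E'$ is a (deterministic) group homomorphism into $\Z{N^2}{*}$, so $E'(1)\cdot E'(-1)=E'(0)=1 \pmod{N^2}$; equivalently, by Corollary \ref{cor:encm}, $E'(1)=(N\mu^{-1}+1)$ and $E'(-1)=(N\mu^{-1}+1)^{-1}$ modulo $N^2$. The same holds componentwise: $(g^{-\epsilon_i}E'(-1))\cdot(g^{\epsilon_i}E'(1))\equiv 1$ and $(g^{-\xi_i}E'(-1))\cdot(g^{\xi_i}E'(1))\equiv 1$ modulo $N^2$. Therefore, given any rider key $sk_u$, one recovers the (unique) taxi key by simply inverting each of its $2\omega+1$ entries modulo $N^2$, and conversely any taxi key yields the rider key by the same inversion. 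Combined with the previous paragraph, this shows that knowledge of a single $sk_u$ (or a single $sk_{t_k}$) determines the keys of all riders and all taxis.

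There is no real obstacle here; the whole argument is essentially a bookkeeping consequence of Lemma \ref{lem:det}. The only point worth stating carefully is that, although $\epsilon_i$ and $\xi_i$ remain unknown to the attacker, one never needs them in the clear: the blinding factors $g^{\pm\epsilon_i}E'(\pm 1)$ and $g^{\pm\xi_i}E'(\pm 1)$ are themselves the usable objects, and modular inversion suffices to flip the signs. I would close the proof by remarking that Remark \ref{rem:n} is irrelevant to the argument: the conclusion holds whether a common modulus and base are used across coordinates or distinct ones are used, since in either case the parameters are fixed at system initialisation and do not depend on the identity of the rider or taxi.
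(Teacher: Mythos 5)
Your proof is correct. The first half (all rider keys coincide, and all taxi keys coincide) is exactly the paper's argument: determinism from Lemma \ref{lem:det} plus the fact that $g$, $\epsilon_i$, $\xi_i$ are global system parameters independent of the registering party. For the second half you take a slightly different, and arguably cleaner, route: rather than explicitly reconstructing the blinding factors, you observe that $E'$ is a deterministic homomorphism, so $E'(1)=E'(-1)^{-1}\pmod{N^2}$, and hence the taxi key \eqref{eq:skt} is exactly the componentwise modular inverse of the rider key \eqref{eq:sku}. The paper instead uses Corollary \ref{cor:encm} to compute $E'(-1)$ and $E'(1)$ publicly, divides them out of the key components to isolate $g^{\pm\epsilon_i}$ and $g^{\pm\xi_i}$, and then reassembles the taxi key from these pieces. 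The two arguments are essentially equivalent, but the paper's version has a side benefit: it leaves the attacker holding the blinding values $g^{\epsilon_i}$, $g^{\xi_i}$ in the clear, which is precisely what is reused in the proof of Lemma \ref{lem:locr} to strip the blinds off ride requests. Your inversion trick proves the present lemma with less machinery (Corollary \ref{cor:encm} is not strictly needed for the crossover step), but, as you note at the end, one would still extract $g^{\pm\epsilon_i}$ and $g^{\pm\xi_i}$ explicitly to carry out the follow-up location-recovery attack.
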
 
\begin{proof}
The first part of the lemma follows directly from Lemma \ref{lem:det}. For the second part, suppose
a rider $u$ has its secret key 
\begin{equation*}
sk_u =  (E'(-1), \langle g^{-\epsilon_i}E'(-1) \rangle_{1 \le i \le \omega}, 
       \langle g^{-\xi_i}E'(-1) \rangle_{1 \le i \le \omega} ).
\end{equation*}
From Corollary \ref{cor:encm}, it can efficiently compute the (unique value of) $E'(-1)$ and $E'(-1)$. 
Hence, any rider (analogously, any taxi) can easily determine all the values 
$g^{\epsilon_i}$, $g^{\xi_i}$. Once these values are obtained, it can compute the (identical) secret keys
of all the taxis. Similarly, any taxi can easily compute the secret keys of all the riders and other taxis. 
\qed  
\end{proof}
The above attack effectively makes the role of the trusted authority redundant as any registered rider or
taxi will be able to add others into the system without the consent of the authority.
\begin{lemma}
\label{lem:locr}
Any rider or taxi will be able to infer the locations of all the riders.
\end{lemma}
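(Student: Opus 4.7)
The plan is to combine Lemma \ref{lem:skid} with the fact that decryption in the modified Paillier cryptosystem is publicly computable via \eqref{eq:dec}, so that the rider's blinded coordinates collapse into recoverable plaintexts.

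First, I would invoke Lemma \ref{lem:skid} to obtain, for each coordinate index $1 \le i \le \omega$, the blinding values $g^{\epsilon_i}$ and $g^{\xi_i}$. The reasoning is that any registered rider or taxi possesses its own secret key of the form \eqref{eq:sku} or \eqref{eq:skt}, and since $E'(\pm 1)$ is uniquely determined by Corollary \ref{cor:encm} and computable from the public key alone, dividing the blinded entries $g^{\pm \epsilon_i}E'(\pm 1)$ (respectively $g^{\pm \xi_i}E'(\pm 1)$) by $E'(\pm 1)$ in $\Z{N^2}{*}$ yields $g^{\pm \epsilon_i}$ (respectively $g^{\pm \xi_i}$) explicitly.

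Next, assuming the adversary observes a rider's ride request $\widehat{R}_u = (\widehat{\bm{c}}_u^+, \widehat{\bm{c}}_u^-, z_u)$, I would un-blind each coordinate using \eqref{eq:chatrp}: the product $g^{\epsilon_i} \cdot \widehat{\bm{c}}_u^+[i] \pmod{N^2}$ equals $E'(\bm{c}_u[i])$, an ordinary modified Paillier ciphertext. Since the public key $PK$ in \eqref{eq:pklpr} contains every parameter needed by the decryption algorithm $D'$ of \eqref{eq:dec}, the adversary then directly computes $\bm{c}_u[i] = D'(E'(\bm{c}_u[i]), PK')$ for each $i$ and reads off the entire location vector $\bm{c}_u$. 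Iterating this over all observed ride requests recovers the location of every rider.

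The main obstacle is not cryptographic but a matter of modelling: one must justify how a generic rider or taxi obtains $\widehat{R}_u$, given that the lpRide threat model explicitly forbids RSP--rider/taxi collusion. I would argue that any taxi that is eventually paired with a rider must in practice learn the pickup location through the service flow, so at minimum matched taxis are adversaries of this form; moreover, nothing in the protocol cryptographically shields $\widehat{R}_u$ from any other party that happens to observe it on the network, since the un-blinding step above relies only on public information and on any registered secret key. Hence the purported blinding of ride requests offers no meaningful confidentiality of rider locations against an adversary who is himself a registered rider or taxi.
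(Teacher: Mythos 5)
Your proposal is correct and follows essentially the same route as the paper: recover the blinding values $g^{\epsilon_i}$, $g^{\xi_i}$ via Lemma \ref{lem:skid}, strip them from the entries of $\widehat{\bm{c}}_u^+$ and $\widehat{\bm{c}}_u^-$, and decrypt the resulting ordinary modified Paillier ciphertexts with the public key, noting that rider coordinates (unlike taxi coordinates) carry no additive random masks. Your additional remarks about how the adversary comes to observe $\widehat{R}_u$ go beyond what the paper states but do not change the argument.
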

\begin{proof}
From the proof of Lemma \ref{lem:skid}, any rider or taxi can easily determine all the values 
$g^{\epsilon_i}$ and $g^{\xi_i}$. Using these computed values, it can unmask the ciphertexts 
(see \eqref{eq:chatrp} and \eqref{eq:chatrm})
that correspond to the locations of the riders, and eventually decrypt them using the public key $PK$. 
Note that these location values are not blinded by the random values 
$\bm{r}_1$ and $\bm{r}_2$ unlike the case of taxi location data. 
\qed
\end{proof}

\begin{remark}
Note that the RSP will not be able to learn the blinding values $g^{\epsilon_i}$ or $g^{\xi_i}$ or the 
exact locations of the riders since it assumed that it cannot collaborate with any rider or driver. 
However, if the RSP is allowed to disguise itself as a rider or a driver, then it will be able to learn
all the secret keys and locations of the rider.
\end{remark}

\section{Conclusion}
\label{sec:con}

We demonstrated an attack on the PP-CA-PSS from \cite{NabeelABB13} where any honest-but-curious 
entity (publisher/broker/subscriber) can learn the subscriptions of any subscriber.   
We also exhibited a key recovery attack on the lpRide RHS protocol \cite{lpRideYu}
by any  honest-but-curious rider or driver who can efficiently recover the secret keys of all other riders 
and drivers, and also learn the location of any rider. The basis of our attack is our 
cryptanalysis of the modified Paillier cryptosystem.
Since the notifications in the case of PP-CA-PSS protocol and the driver locations in the case
of lpRide are blinded by random values, we do not know how to recover these values, if at all 
it is possible. It will be interesting to explore this attack scenario.
Also, it will be interesting to explore
candidate constructions that are equivalent to the functionality of the modified Paillier cryptosystem but 
offer better security.

\section*{Acknowledgements}
This work was funded by the INSPIRE Faculty Award (by DST, Govt. of India) for the author.

\bibliography{morerefs}
\bibliographystyle{alpha}

\hyphenation{op-tical net-works semi-conduc-tor}

%
%
\end{document}